\begin{document}
%
\title{Regularized Zero-Forcing for Multiantenna Broadcast Channels with User Selection}

\author{Zijian Wang, and Wen Chen,~\IEEEmembership{Senior Member,~IEEE}
\thanks{Manuscript received November 28th, 2011; revised January 11th, 2012, accepted
February 8th, 2012. The associate editor coordinating the review of
this paper and approving it for publication was D. Huang.}
\thanks{The authors are with
Department of Electronic Engineering, Shanghai Jiao Tong University,
Shanghai, and SKL for ISN, Xidian University, China. e-mail:
\{yuyang83;wenchen\}@sjtu.edu.cn.}
\thanks{This work is supported by NSFC \#60972031, by national 973 project
\#2012CB316106 and \#2009CB824900, by NSFC \#61161130529, by
national key laboratory project \#ISN11-01.}
}

\maketitle

\begin{abstract}
A multiantenna multiuser broadcast channel with transmitter beamforming and user selection is considered. Different from the conventional works, we consider imperfect channel state information (CSI) which is a practical scenario for multiuser broadcast channels. We propose a robust regularized zero-forcing (RRZF) beamforming at the base station. Then we show that the RRZF outperforms zero-forcing (ZF) and regularized ZF (RZF) beamforming even as the number of users grows to infinity. Simulation results validate the advantage of the proposed robust RZF beamforming.
\end{abstract}

\begin{keywords}
Multiantenna multiuser, signal-to-interference-plus-noise ratio (SINR), beamforming, regularized zero-forcing (RZF).
\end{keywords}

\section{Introduction}
In recent years, multiple-input multiple-output (MIMO) has drawn considerable interest due to the advantages of increasing the data rate~\cite{MIMO}. Several beamformings have been presented in the literature to provide the multiplexing gain. But for multiantenna broadcast channels, only precodings can be implemented at the transmitter because the receivers do not mutually cooperate. Linear transmit precodings for broadcast channels have been studied in~\cite{ZF,swindlehurst}.

For broadcast channels with large number of users,  user selection
is necessary to provide multiuser diversity. In \cite{Yoo,Yoo2}, the
authors propose zero-forcing (ZF) beamforming at the transmitter in
conjunction with a semiorthogonal user selection (SUS) algorithm.
Performance analysis of ZF beamforming is studied in
\cite{victoria}. In \cite{hassibi}, different beamforming and user
selection schemes are compared and analyzed. To deal with the poor
performance of ZF for small number of users, beamformings based on
hybrid zero-forcing and orthogonal beamforming \cite{WeiXu} and
channel inversion regularization \cite{RZFselect} are proposed.
Methods to reduce the feedback needed for user selection have been
studied in \cite{reporting}.

In this letter, we propose a robust regularized  zero-forcing (RRZF)
beamforming, where the user selection is based on the SUS algorithm
as in \cite{Yoo}. While the ZF beamforming and regularized ZF (RZF)
have degraded performance for imperfect channel state information
(ICSI), the proposed RRZF significantly improves the performance. While the
conventional optimal $\alpha$ in \cite{RZFselect,wan} is $M/\rho$, where
$M$ is the number of transmit antenna and $\rho$ is the
signal-to-noise-ratio (SNR), we found that the optimal $\alpha$
grows with the number of users.
Although the RRZF is optimized for small number of users, we show
that in the extremal case when the number of users is infinity, the
sum rate performance of RRZF still outperforms the ZF and RZF
beamforming.  Especially, we show that in this extremal case, the
sum rate is monotonically  increasing with the regularizing factor
$\alpha$, and the optimal $\alpha$ is infinity.

In this letter, boldface lowercase letter and boldface uppercase
letter represent vectors and matrices, respectively.  Notations
$\|\mathbf{a}\|$  stands for the Euclidean norm of a vector
$\mathbf{a}$ and $|{a}|$ stands for the modulus of a complex ${a}$
respectively. $\mathrm{tr}(\cdot)$ and $(\cdot)^H$ denote the trace
and conjugate transpose operation of a matrix. Term $\mbox{\boldmath
$\mathbf{I}$}_N$ is an $N{\times}N$ identity matrix.
$\overset{w.p.}{\longrightarrow}$ represents convergence with
probability one. Finally, we denote the expectation operation by
$\mathrm{E}\left\{\cdot\right\}$.

\section{System Model}
We consider a multiantenna multiuser broadcast network which consists of a base
station equipped with $M$ antennas, and $K$ user terminals each with only a single
antenna. It is assumed that $K>M$. So the base station needs to choose $M$ favorable users out of the $K$ users to transmit $M$ datas simultaneously. Then the base station broadcasts $M$ precoded data streams after applying a linear precoder to the original data vector
${\bf{s}}\in \mathbb{C}^M$, where $\mathrm{E} \{{\bf{s}}{\bf{s}}^H
\}={\bf{I}}_M$. We denote the precoding matrix at the base station
as ${\bf{W}}$ and suppose that the base station transmit power is
$P$.  A power control factor can be derived as
\begin{equation}\label{6}
\rho=\sqrt{\frac{P}{\mathrm{E}\{{\bf{s}}^H {\bf{W}}^H{\bf{W}} {\bf{s}}\}}}=\sqrt{\frac{P}{\mathrm{tr}({\bf{W}}^H {\bf{W}})}}.
\end{equation}
The received signal vector at the selected $M$ user terminals is
\begin{equation}\label{2}
{\bf{y}}=\rho {\bf{H}} {\bf{W}} {\bf{s}}+{\bf{n}},
\end{equation}
where ${\bf{H}}\in  \mathbb{C}^{M\times M}$ is the Rayleigh broadcast channel
matrix from the base station to the $M$ selected users, in which, all entries are  $i.i.d$ complex Gaussian
distributed with zero mean and unit variance, and ${\bf{n}}\in
\mathbb{C}^M$  is the noise vector, in which, all the
entries are $i.i.d$ complex Gaussian distributed with zero mean and
variance $\sigma^2$.

From (\ref{2}), the received signal at the $k$-th user can be rewritten as
\begin{equation}\label{5}
{{y}}_k=\rho {\bf{h}}_k^H {\bf{W}} {\bf{s}}+{{n}}_k
=\rho {\bf{h}}_k^H {\bf{w}}_k {s}_k+\sum_{j=1,j\neq k}^M\rho {\bf{h}}_k^H{\bf{w}}_j{s}_j+{{n}}_k,
\end{equation}
where ${\bf{w}}_k$ is the $k$-th column of ${\bf{W}}$ and ${\bf{h}}_k^H$ is the $k$-th row of $\mathbf{H}$ denoting the channel vector from the base station to the $k$-th user.
Therefore, the signal-to-interference-plus-noise ratio (SINR) of the $k$-th user is
\begin{equation}\label{4}
\mathrm{SINR}_k=\frac{\rho^2 |{\bf{h}}_k^H {\bf{w}}_k|^2}{\rho^2 \sum_{j=1,j\neq k}^M |{\bf{h}}_k^H{\bf{w}}_j|^2+\sigma^2}.
\end{equation}

For we aim to analyze the RZF for user selection instead of finding the optimal algorithm, we generalize a simplified SUS (semiorthogonal user selection) algorithm in \cite{Yoo} as follows. It will be stopped when $|\mathcal{S}|=M$.

Step 1) Initialization:
\begin{eqnarray}
\mathcal{X}_1=\left\{1,\ldots,K\right\};\quad
i=1;\quad
\mathcal{S}=\phi;
\end{eqnarray}

Step 2) Select the $i$th user as follows:
\begin{eqnarray}
\pi(i)=\mathrm{arg} \underset{k\in\mathcal{X}_{i}}\max\|\mathbf{h}_k\|;\quad
S\leftarrow S\cup {\pi(i)};
\end{eqnarray}

Step 3) If $|\mathcal{S}|<M$, then calculate $\mathcal{X}_{i+1}$, and the set of users semiorthogonal to $\mathbf{h}_{\pi(i)}$
\begin{eqnarray}
\mathcal{X}_{i+1}&=&\left\{k\in \mathcal{X}_i, k \neq \pi(i) | \frac{|{\bf{h}}_{\pi(i)}^H {\bf{h}}_k|}{\|{\bf{h}}_{\pi(i)}\|\|{\bf{h}}_k\|}<\beta\right\};\label{1}\\
i&\leftarrow&i+1.
\end{eqnarray}
In every step, the algorithm selects the best user among the user pool which are semiorthogonal to the selected users.
\section{RRZF for ICSI and performance analysis}
In this section, we first propose an RRZF beamforming at the base station considering ICSI. The  regularizing factor $\alpha$ in RRZF is larger than that in RZF since additional noise  inherited from the CSI error is considered. Then we show that in the extremal case where the number of users is infinity, the sum rate is monotonically increasing with the $\alpha$, which implies that the proposed RRZF outperforms ZF and RZF. Since it is difficult to obtain the distribution of channel matrix for moderate user number, we give simulation results of optimal $\alpha$ in Fig.~1.
\subsection{RRZF beamforming for ICSI}
The power penalty problem exists in ZF  because the beamforming
vector does not match with the channel vector for each user. This
can be solved by selecting users with nearly orthogonal channel
vectors.

But it is still a severe problem for small user numbers because finding $M$ semiorthogonal users is not guaranteed. Adding an identity matrix multiplied by a regularizing factor $\alpha$ before the inversion manipulation is another efficient way to solve the power penalty problem~\cite{swindlehurst}.
Implementing RZF beamforming, we have
$
{\bf{W}}=\mathbf{H}^H\left(\mathbf{H}\mathbf{H}^H+\alpha\mathbf{I}\right)^{-1}
$
in (\ref{5}).
Note that the channel inversion regularization brings interference among different users if $\alpha\neq 0$. The optimal tradeoff of $\alpha$ is obtained in~\cite{swindlehurst} as
$
\alpha^{\mathrm{RZF}}=M\sigma^2/P.
$

The CSI in the practical scenario is imperfect due to large delay caused by user selection. We propose a robust RZF (RRZF) by optimizing the $\alpha$. We model the imperfect CSI as \cite{love}
\begin{equation}\label{9}
{\bf{H}}=\hat{\bf{H}}+e{\bf{\Omega}},
\end{equation}
where $e{\bf{\Omega}}$ is the CSI error independent of $\hat{\bf{H}}$, and ${\bf{\Omega}}$ is unknown to the base station and the user terminals. The entries of ${\bf{\Omega}}$ are $i.i.d$ complex Gaussian distributed with zero mean and unit
variance, and $e^2$ denotes the power of the CSI error which is known to the base station. Then the received signal vector can be rewritten as
\begin{equation}
{\bf{y}}=\hat{\rho} {\bf{H}} \hat{{\bf{W}}} {\bf{s}}+{\bf{n}}=
\hat{\rho} \hat{{\bf{H}}} \hat{{\bf{W}}} {\bf{s}}+e\hat{\rho} {\bf{\Omega}} \hat{{\bf{W}}} {\bf{s}}+{\bf{n}},
\end{equation}
where $\hat{\bf{W}}=\hat{\mathbf{H}}^H\left(\hat{\mathbf{H}}\hat{\mathbf{H}}^H+\alpha\mathbf{I}\right)^{-1}$ and $\hat{\rho}$ is derived by substituting $\hat{\bf{W}}$ into (\ref{6}).
The covariance of the noise becomes
\begin{equation}
\begin{split}
&\mathrm{E}\left\{\left(e\hat{\rho} {\bf{\Omega}} \hat{{\bf{W}}} {\bf{s}}+\mathbf{n}\right)\left(e\hat{\rho} {\bf{\Omega}} \hat{{\bf{W}}} {\bf{s}}+\mathbf{n}\right)^H\right\}\\
=&e^2{\hat{\rho}}^2\mathrm{E}\left\{{\bf{\Omega}} \hat{{\bf{W}}}
{\bf{s}} {\bf{s}}^H
\hat{{\bf{W}}}^H{\bf{\Omega}}^H\right\}+\mathrm{E}\left\{\mathbf{n}\mathbf{n}^H\right\}
=\left(e^2P+\sigma^2\right)\mathbf{I}_M,
\end{split}
\end{equation}
where we used the fact $\mathrm{E}\{\mathbf{\Omega} \mathbf{A} \mathbf{\Omega}^H\}=\mathrm{tr}(\mathbf{A})\mathbf{I}_N$ for any $N\times N$ matrix $\mathbf{A}$ \cite{trace}.
We use the diagonal decomposition
\begin{equation}\label{10}
\hat{{\bf{H}}}\hat{{\bf{H}}}^H=\mathbf{Q}\mathbf{\Lambda} \mathbf{Q}^H
\end{equation}
in the following analysis where $\mathbf{\Lambda}=\mathrm{diag}\{\lambda_1,\ldots,\lambda_M\}$ is a diagonal matrix. From (\ref{9}), the imperfect CSI is a scaled version of Rayleigh channel matrix with eigenvalues scaled by $\left(1-e^2\right)^{\frac{1}{2}}$. Since in the decomposition (\ref{10}), $\mathbf{Q}$ and $\mathbf{\Lambda}$ are independent \cite{independent}, the statistic distribution is the same as in the perfect channel matrix. Therefore, we can use the method  as in \cite{swindlehurst} of taking expectations over $\mathbf{Q}$ to the desired signal and the interference to divide the desired signal and the interference in $\hat{\rho} \hat{{\bf{H}}} \hat{{\bf{W}}} {\bf{s}}$ and finally obtain the average SINR at each user terminal as a function of the eigenvalues of $\hat{{\bf{H}}}$, that is
\begin{equation}\label{11}
\begin{split}
&\mathrm{SINR}\left(\{\lambda\}\right)\\
&=\frac{\mathrm{E}\left\{\left(\hat{\rho} \hat{{\bf{H}}} \hat{{\bf{W}}}\right)_{k,k}\right\}}{\sum_{j=1,j\neq k}^M\mathrm{E}\left\{\left(\hat{\rho} \hat{{\bf{H}}} \hat{{\bf{W}}}\right)_{k,j}\right\}+e^2P+\sigma^2}\\
&=\frac{\left(\sum\frac{\lambda}{\lambda+\alpha}\right)^2+\sum\frac{\lambda^2}{(\lambda+\alpha)^2}}{\left(e^2+\frac{\sigma^2}{P}\right)M(M+1)\sum\frac{\lambda}{(\lambda+\alpha)^2}+M\sum\frac{\lambda^2}{(\lambda+\alpha)^2}-\left(\sum\frac{\lambda}{\lambda+\alpha}\right)^2},
\end{split}
\end{equation}
where the summation $\sum$ is taken from $\lambda_1$ to $\lambda_M$. The optimal $\alpha$ can be obtained by taking derivative to (\ref{11}) and setting it to zero. After some manipulations, we have
\begin{equation}
\sum_{k<l}\frac{\lambda_k\lambda_l\left(\lambda_k-\lambda_l\right)^2\left(M\left(\frac{\sigma^2}{P}+e^2\right)-\alpha\right)}{(\lambda_k+\alpha)^3(\lambda_l+\alpha)^3}=0,
\end{equation}
which implies $\alpha^{\mathrm{RRZF}}=M\left(\frac{\sigma^2}{P}+e^2\right)$.

\subsection{Performance analysis for large $K$}
In the following, we analyze the behavior of the RZF beamforming for large number of users. Imperfect CSI is assumed in the analysis. However, the conclusion also holds for perfect CSI which is a special case with $e=0$.

In the SUS algorithm,  if the $\beta$ in (\ref{1}) is too large, the selected users are not semiorthogonal enough. If it is too small, there is less user pool so that the multiuser gain is not provided. We will use the optimal $\beta$ for each $K$ in the simulations. As $K$ grows to infinity, the optimal $\beta$ decreases to zero. For an extremal case $\beta=0$, we obtain the following theorem which shows that, unlike the characteristic that RZF converges to ZF as $P/\sigma^2\rightarrow +\infty$, the RZF does not converge to ZF as $K\rightarrow +\infty$, and the proposed RRZF outperforms RZF and ZF. Note that the MF beamforming is $\mathbf{W}=\widehat{\mathbf{H}}^H$. It can be viewed as an RZF beamforming with $\alpha=+\infty$, because in this case
\begin{equation}
\begin{split}
&\rho\mathbf{W}=\sqrt{\frac{P}{\mathrm{tr}\left(\widehat{\mathbf{H}}\widehat{\mathbf{H}}^H\left(\widehat{\mathbf{H}}\widehat{\mathbf{H}}^H+\alpha\mathbf{I}_M\right)^{-2}\right)}}\widehat{\mathbf{H}}^H\left(\widehat{\mathbf{H}}\widehat{\mathbf{H}}^H+\alpha\mathbf{I}_M\right)^{-1}
\\&\overset{w.p.}\longrightarrow \sqrt{\frac{P}{\mathrm{tr}\left(\widehat{\mathbf{H}}\widehat{\mathbf{H}}^H\left(\alpha\mathbf{I}_M\right)^{-2}\right)}}\widehat{\mathbf{H}}^H\left(\alpha\mathbf{I}_M\right)^{-1}
=\sqrt{\frac{P}{\mathrm{tr}\left(\widehat{\mathbf{H}}\widehat{\mathbf{H}}^H\right)}}\widehat{\mathbf{H}}^H.
\end{split}
\end{equation}
\newtheorem{theorem}{Theorem}
\begin{theorem}
If $\beta=0$, then
\begin{equation}
\mathrm{SNR}^{\mathrm{ZF}}<\mathrm{SNR}^{\mathrm{RZF}}<\mathrm{SNR}^{\mathrm{RRZF}}<\mathrm{SNR}^{\mathrm{MF}}.
\end{equation}
\end{theorem}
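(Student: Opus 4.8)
The plan is to exploit the fact that setting $\beta=0$ forces the selected (estimated) channel vectors to be mutually orthogonal, so that $\widehat{\mathbf{H}}\widehat{\mathbf{H}}^H=\mathbf{\Lambda}=\mathrm{diag}(\lambda_1,\dots,\lambda_M)$ and the eigenvector matrix $\mathbf{Q}$ in (\ref{10}) reduces to the identity. Under this structure the forward channel is $\widehat{\mathbf{H}}\widehat{\mathbf{W}}=\mathbf{\Lambda}(\mathbf{\Lambda}+\alpha\mathbf{I})^{-1}$, which is diagonal; hence the inter-user interference coming from the known part of the channel vanishes identically, and the only impairment is the thermal-plus-CSI-error noise of power $e^2P+\sigma^2$ computed earlier. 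Consequently the averaging over $\mathbf{Q}$ that produced the interference term in the denominator of (\ref{11}) is no longer appropriate: for $\beta=0$ each user sees a genuine interference-free ratio $\mathrm{SNR}_k=\frac{P}{e^2P+\sigma^2}\,\frac{\lambda_k^2/(\lambda_k+\alpha)^2}{\sum_j\lambda_j/(\lambda_j+\alpha)^2}$, where $\widehat\rho^{2}=P/\sum_j\lambda_j/(\lambda_j+\alpha)^2$ is the power normalization from (\ref{6}).

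First I would reduce the four-way chain to a single monotonicity statement. Since the paper already fixes $\alpha^{\mathrm{ZF}}=0<\alpha^{\mathrm{RZF}}=M\sigma^2/P<\alpha^{\mathrm{RRZF}}=M(\sigma^2/P+e^2)<\alpha^{\mathrm{MF}}=+\infty$ (the last identification being exactly the $\mathbf{W}=\widehat{\mathbf{H}}^H$ limit established just before the theorem), it suffices to prove that the aggregate $\mathrm{SNR}(\alpha)=\sum_k\mathrm{SNR}_k=\frac{P}{e^2P+\sigma^2}\,g(\alpha)$ is strictly increasing in $\alpha$, where $g(\alpha)=\big(\sum_k\lambda_k^2/(\lambda_k+\alpha)^2\big)\big/\big(\sum_j\lambda_j/(\lambda_j+\alpha)^2\big)$. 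The MF endpoint is the supremum $\lim_{\alpha\to\infty}g(\alpha)=\sum_k\lambda_k^2/\sum_j\lambda_j$, so ``optimal $\alpha=+\infty$'' is precisely this $\sup$ being attained in the limit.

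The core step is the sign of $g'(\alpha)$. Writing $g(\alpha)=\sum_k\lambda_k q_k$ as a weighted mean with weights $q_k\propto\lambda_k/(\lambda_k+\alpha)^2$ makes the mechanism transparent: increasing $\alpha$ tilts the weights from $\propto1/\lambda_k$ (ZF) toward $\propto\lambda_k$ (MF), i.e.\ toward the larger eigenvalues. Using $\tfrac{d}{d\alpha}\log q_k=-2/(\lambda_k+\alpha)$ gives $g'(\alpha)=-2\,\mathrm{Cov}_{q}\!\big(\lambda,\,1/(\lambda+\alpha)\big)$, and since $1/(\lambda+\alpha)$ is strictly decreasing in $\lambda$ this covariance is negative whenever the $\lambda_k$ are not all equal, so $g'(\alpha)>0$. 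Equivalently, repeating the pairwise algebra that led to the optimal-$\alpha$ equation in the paper but now \emph{without} the interference contribution yields $g'(\alpha)\propto\sum_{k<l}\frac{\lambda_k\lambda_l(\lambda_k-\lambda_l)^2}{(\lambda_k+\alpha)^3(\lambda_l+\alpha)^3}$; the decisive difference from the RRZF derivative is that the sign-changing factor $\big(M(\sigma^2/P+e^2)-\alpha\big)$ is absent, so the derivative stays positive for every $\alpha\ge0$ instead of vanishing at a finite point.

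The last ingredient is strictness, which requires the eigenvalues to be distinct, and this is where $K\to\infty$ enters. With $\beta=0$ the greedy norm-maximizing selection extracts $M$ mutually orthogonal directions whose squared norms are the largest order statistics of the $K$ i.i.d.\ channel norms $\|\mathbf{h}_k\|^2$; these are almost surely distinct and all diverge (at rate $\log K$), so $g'(\alpha)>0$ strictly. Evaluating the strictly increasing $g$ at the four ordered values of $\alpha$ then gives $\mathrm{SNR}^{\mathrm{ZF}}<\mathrm{SNR}^{\mathrm{RZF}}<\mathrm{SNR}^{\mathrm{RRZF}}<\mathrm{SNR}^{\mathrm{MF}}$. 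I expect the main obstacle to be conceptual rather than computational: one must justify discarding the interference term of (\ref{11}) in the orthogonal regime (equivalently, that $\mathbf{Q}\to\mathbf{I}$ removes the $\mathbf{Q}$-averaging), since it is exactly that term which produces the finite optimum $\alpha^{\mathrm{RRZF}}$ for generic channels; once this reduction is settled, the covariance inequality for $g'$ is routine.
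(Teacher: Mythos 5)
Your proposal is correct and follows essentially the same route as the paper: use $\beta=0$ to make $\widehat{\mathbf{H}}\widehat{\mathbf{H}}^H$ diagonal so the effective channel is interference-free, reduce the four-way chain to the strict monotonicity in $\alpha$ of the ratio $\bigl(\sum_m\lambda_m^2/(\lambda_m+\alpha)^2\bigr)/\bigl(\sum_m\lambda_m/(\lambda_m+\alpha)^2\bigr)$, and evaluate at the ordered regularization values $0<M\sigma^2/P<M(\sigma^2/P+e^2)<+\infty$. Your covariance reformulation of the derivative's sign is only a cosmetic repackaging of the paper's pairwise sum $\sum_{i>j}\lambda_i\lambda_j(\lambda_i-\lambda_j)^2/\bigl((\lambda_i+\alpha)^3(\lambda_j+\alpha)^3\bigr)>0$, which you also reproduce.
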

\begin{proof}
When $\beta=0$, $\mathbf{h}_i^H\mathbf{h}_j=0$ for any $i\neq j$. Therefore,
\begin{equation}
\begin{split}
\widehat{\mathbf{H}}\widehat{\mathbf{H}}^H&=[\hat{\mathbf{h}_1},\ldots,\hat{\mathbf{h}_M}]^H\cdot[\hat{\mathbf{h}_1},\ldots,\hat{\mathbf{h}_M}]\\
&=\mathrm{diag}\left\{\|\hat{\mathbf{h}_1}\|^2,\ldots,\|\hat{\mathbf{h}_M}\|^2\right\}\\
&\triangleq\mathrm{diag}\left\{\lambda_1,\ldots,\lambda_M\right\}.
\end{split}
\end{equation}
Define the effective channel matrix $\mathbf{H}_{\mathrm{eff}}=\widehat{\mathbf{H}}\mathbf{W}$. We have the average SNR of each user of the RZF beamforming as
\begin{equation}
\begin{split}
\overline{\mathrm{SNR}}&=\frac{1}{M}\sum_{i=1}^M\frac{\rho^2|\left(\mathbf{H}_{\mathrm{eff}}\right)_{i,i}|^2}{\rho^2\sum_{j=1,j\neq i}^M|\left(\mathbf{H}_{\mathrm{eff}}\right)_{i,j}|^2+\left(e^2P+\sigma^2\right)}\\
&=\frac{\rho^2\mathrm{tr}\left(\mathbf{H}_{\mathrm{eff}}^2\right)}{M\left(e^2P+\sigma^2\right)}
=\frac{P\mathrm{tr}\left(\mathbf{H}_{\mathrm{eff}}^2\right)}{M\left(e^2P+\sigma^2\right)\mathrm{tr}\left(\mathbf{W}\mathbf{W}^H\right)}\\
&=\frac{P}{M\left(e^2P+\sigma^2\right)}\frac{\mathrm{tr}\left(\left(\widehat{\mathbf{H}}\widehat{\mathbf{H}}^H\left(\widehat{\mathbf{H}}\widehat{\mathbf{H}}^H+\alpha\mathbf{I}_M\right)^{-1}\right)^2\right)}{\mathrm{tr}\left(\widehat{\mathbf{H}}\widehat{\mathbf{H}}^H\left(\widehat{\mathbf{H}}\widehat{\mathbf{H}}^H+\alpha\mathbf{I}_M\right)^{-2}\right)}\\
&=\frac{P}{M\left(e^2P+\sigma^2\right)}\frac{\sum_{m=1}^M \frac{\lambda_m^2}{(\lambda_m+\alpha)^2}}{\sum_{m=1}^M\frac{\lambda_m}{(\lambda_m+\alpha)^2}}.\label{7}
\end{split}
\end{equation}
Taking derivative to (\ref{7}) with respect to $\alpha$, we have
\begin{multline}
\frac{d}{d\alpha}\frac{\sum_{m=1}^M \frac{\lambda_m^2}{(\lambda_m+\alpha)^2}}{\sum_{m=1}^M\frac{\lambda_m}{(\lambda_m+\alpha)^2}}\\
=\frac{2}{\left(\sum_{m=1}^M\frac{\lambda_m}{(\lambda_m+\alpha)^2}\right)^2}
\left(\sum_{m=1}^M\frac{\lambda_m}{(\lambda_m+\alpha)^3}\sum_{m=1}^M \frac{\lambda_m^2}{(\lambda_m+\alpha)^2}\right.\\\left.-\sum_{m=1}^M\frac{\lambda_m^2}{(\lambda_m+\alpha)^3}\sum_{m=1}^M \frac{\lambda_m}{(\lambda_m+\alpha)^2}\right),
\label{8}
\end{multline}
where
\begin{equation}
\begin{split}
&\sum_{m=1}^M\frac{\lambda_m}{(\lambda_m+\alpha)^3}\sum_{m=1}^M \frac{\lambda_m^2}{(\lambda_m+\alpha)^2}\\
&\quad\quad\quad\quad-\sum_{m=1}^M\frac{\lambda_m^2}{(\lambda_m+\alpha)^3}\sum_{m=1}^M \frac{\lambda_m}{(\lambda_m+\alpha)^2}\\
=&\sum_{m=1}^M\frac{\lambda_m}{(\lambda_m+\alpha)^3}\sum_{m=1}^M \frac{\lambda_m^2(\lambda_m+\alpha)}{(\lambda_m+\alpha)^3}\\
&\quad\quad\quad\quad-\sum_{m=1}^M\frac{\lambda_m^2}{(\lambda_m+\alpha)^3}\sum_{m=1}^M \frac{\lambda_m(\lambda_m+\alpha)}{(\lambda_m+\alpha)^3}\\
=&\sum_{i\neq j}\frac{\lambda_i\lambda_j^2(\lambda_j+\alpha)-\lambda_i^2\lambda_j(\lambda_j+\alpha)}{(\lambda_i+\alpha)^3(\lambda_j+\alpha)^3}\\
=&\sum_{i>j}\frac{\lambda_i\lambda_j\left(\lambda_i-\lambda_j\right)^2}{(\lambda_i+\alpha)^3(\lambda_j+\alpha)^3}>0.
\end{split}
\end{equation}
Therefore, the SNR is monotonically increasing with $\alpha$. When $\alpha=0$, the beamforming is ZF. When $\alpha=+\infty$, it is MF.
\end{proof}

Therefore, the sum rate is also monotonically increasing with $\alpha$ for large number of users.
Actually, $\beta=0$ only when $K=\infty$. Therefore, as $K$ grows, although the sum rate performance of ZF improves by selecting semiorthogonal users, it remains inferior to the RZF.
From Theorem 1, we also see that for $K=+\infty$, the optimal $\alpha$ becomes $+\infty$. In fact, the conventional $\alpha^{\mathrm{opt}}=M\sigma/P$ only holds when $K=M$ because the distribution of the broadcast channel matrix $\mathbf{H}$ has changed when semiorthogonal channels are selected. The $\alpha^{\mathrm{opt}}$ grows with $K$, which is validated by simulation in Fig.1. In Fig.1, we simulate the optimal $\alpha$ versus the decreasing $\beta$ because the $\beta$ decreases as $K$ increases. We observe that $\alpha^{\mathrm{opt}}$ grows rapidly after $\beta<0.3$.
\section{Simulation Results}

In this section,  numerical results are carried out to show the
advantage of the proposed RRZF beamforming with SUS algorithm. The
performance is compared with ZF beamforming and the conventional RZF
with SUS algorithm in terms of sum rate. Both are assumed uniform
power allocation with a power control factor. For each $M$ and $K$,
we use the optimal $\beta$.

Fig.~2 shows the sum rates versus the number of users ($K$)  for low
to moderate $K$. We set $M=2,4,6$ and $P/\sigma^2=15dB$. We see that
for small $K$, the proposed robust RZF has an apparent advantage to
the conventional RZF and ZF as the $\mathrm{SINR}^{\mathrm{RRZF}}$
better balances the additional noise inherited from the CSI error.
As $K$ increases, the performance gap decreases because the power
penalty problem is solved by selecting semiorthogonal user channels.
Note that for $K=M$, the network is equivalent to a conventional
broadcast channel. In this case, as $M$ increases, the power penalty
in ZF beamforming becomes more apparent so the performance gap
between ZF and RZF grows. Note that the sum rate of both
beamformings grows like $M\log\log K$~\cite{Yoo2}.
\begin{figure}[!t]
\centering
\includegraphics[width=3.4in]{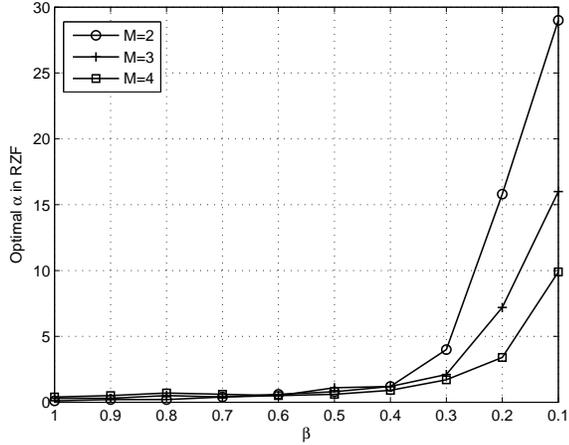}
\caption{Optimal $\alpha$ vs. $\beta$ for $M=2,3,4$, $e^2=0.1$ and
$P/\sigma^2=30dB$.}
\end{figure}
\begin{figure}[!t]
\centering
\includegraphics[width=3.4in]{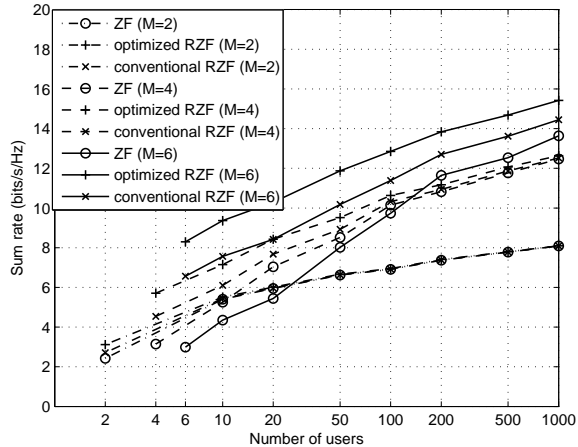}
\caption{Sum rate performances vs. the number of users. $M=2,4,6$,
$e^2=0.1$, and $P/\sigma^2=15dB$.}
\end{figure}
\begin{figure}[t]
\centering
\includegraphics[width=3.4in]{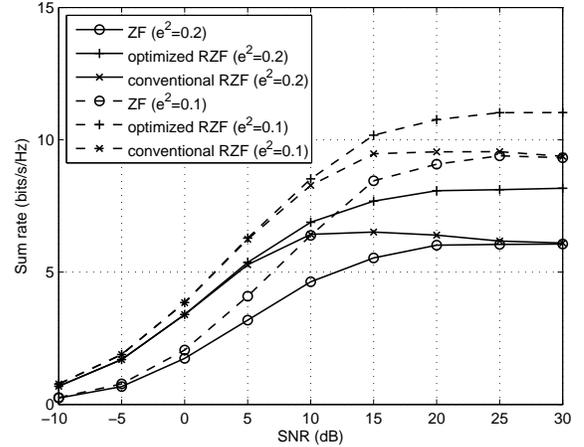}
\caption{Sum rate performances vs. the SNR of the broadcast channel.
$M=4$ and $K=20$. $e^2=0.2$ and $0.1$.}
\end{figure}

In Fig.~3, we compare the sum rates versus the power of CSI error. When CSI is imperfect, the sum rates have "ceiling effect" because the power of the desired signal and the power of the noise inherited from CSI error both goes to infinity with the SNR. The robust RZF uses $\alpha=M\left(\frac{\sigma^2}{P}+e^2\right)$ to compensate the noise and CSI error. We see that the conventional RZF converges to ZF because $\alpha=\frac{M\sigma^2}{P}\rightarrow 0$ as $P\rightarrow +\infty$. So the proposed RRZF is more robust to ZF and RZF for multiuser selection at high SNR, although it has the same performance as RZF in low SNR because the CSI error is not critical in this case.

\section{Conclusion}
In this letter, we propose an RRZF beamforming for the multiantenna broadcast channel with the semiorthogonal user selection (SUS) algorithm for imperfect CSI. The RRZF has significant advantage to ZF and RZF for small number of users. We also show that RRZF  outperforms ZF and RZF in the extremal case of $K=+\infty$. The optimal regularizing factor $\alpha$ in RZF is no more the conventional, but increases with $K$. Since it is difficult to derive the closed-form of $\alpha$ for moderate $K$, we obtain it by monte-carlo simulations.

\end{document}